\documentclass[journal]{IEEEtran}
\usepackage{amsmath, amssymb, amsthm, bm}
\usepackage{graphicx}
\usepackage{cite}
\usepackage{xcolor}
\usepackage{balance}

\newtheorem{theorem}{Theorem}
\newtheorem{corollary}{Corollary}

\newcommand{\E}{\mathbb{E}}
\newcommand{\CN}{\mathcal{CN}}
\newcommand{\Real}{\mathrm{Re}}

\title{Distortion Is Not Noise: On the Limits of the Kappa Model for Monostatic ISAC}

\author{Haofan~Dong,~\IEEEmembership{Student~Member,~IEEE,}
        and~Ozgur~B.~Akan,~\IEEEmembership{Fellow,~IEEE}%
\thanks{H.~Dong, and O.~B.~Akan are with the Internet of Everything Group,
Department of Engineering, University of Cambridge, Cambridge CB3~0FA, U.K.
(e-mail: hd489@cam.ac.uk; oba21@cam.ac.uk).}%
\thanks{Ozgur B. Akan is also with the Center for neXt-generation Communications
(CXC), Department of Electrical and Electronics Engineering,
Ko\c{c} University, 34450 Istanbul, Turkey.}}

\begin{document}
\maketitle

\begin{abstract}
Monostatic ISAC sensing differs from communication because the transmitter can monitor its distorted transmit waveform. 
Thus, the aggregate $\kappa$ distortion model, which treats impairments as unknown noise, is appropriate for communication but pessimistic for monostatic sensing. 
We derive PA-aware sensing Cram\'er--Rao bounds (CRBs) and a PN-aware CRB that reveals an irreducible velocity-error floor, and quantify when $\kappa$-based bounds overestimate sensing degradation. 
Simulations validate the analysis and show robustness to practical DPD template errors (less than 1~dB overhead at a typical $-25$~dB NMSE).
\end{abstract}

\begin{IEEEkeywords}
Integrated sensing and communication, hardware impairments, Cram\'{e}r--Rao bound, power amplifier, phase noise, Fisher information.
\end{IEEEkeywords}

\section{Introduction}
\label{sec:intro}

Integrated sensing and communication (ISAC) has emerged as a key enabler for next-generation wireless systems~\cite{Liu2022survey, Sturm2011, Koivunen2024}.
A critical practical challenge is the impact of radio-frequency (RF) hardware impairments---particularly power amplifier (PA) nonlinearity and oscillator phase noise (PN)---on joint performance.

The dominant approach models all impairments via the aggregate $\kappa$ distortion model~\cite{Bjornson2013, Schenk2008, Studer2010}: $z = \sqrt{1{-}\kappa_t^2}\, x + \eta_t$ with $\eta_t \sim \CN(0, \kappa_t^2 P_X)$, adopted for ISAC analysis~\cite{Salman2025} and waveform design~\cite{MateosRamos2024}.
Notably,~\cite{Salman2025} directly applies the $\kappa$ model to derive sensing SCNR for ISAC precoding, and~\cite{MateosRamos2024} uses it as the HWI model for end-to-end ISAC learning---both inheriting the ``distortion-as-noise'' assumption for the sensing echo.
However, the $\kappa$ model assumes the distortion $\eta_t$ is \emph{unknown} at the receiver.
While valid for communication, this is violated in monostatic sensing, where the co-located transmitter knows its own distorted waveform (Fig.~\ref{fig:system}).

On the physics-based side, Keskin~\emph{et al.}~\cite{Keskin2023} derive the PN-aware sensing CRB via the augmented FIM for monostatic OFDM, and our prior work~\cite{Dong2025mimo} models individual HWI for THz ISAC.
However,~\cite{Keskin2023} does not address PA nonlinearity, does not compare against the $\kappa$ model, and does not establish the separability of PA and PN effects.
No prior work systematically compares the $\kappa$ model against physics-based HWI models for \emph{both} sensing and communication.

The contributions are: (i)~a physics-based sensing CRB under PA nonlinearity (Theorem~\ref{thm:pa_crb}), with SNR-independent degradation $<$1.1~dB, and a proof that the $\kappa$ model overestimates by a factor growing linearly with SNR (Theorem~\ref{thm:gap}); (ii)~a PN-aware Doppler CRB following the augmented-FIM approach in~\cite{Keskin2023} (Theorem~\ref{thm:pn}), revealing a velocity floor invisible to $\kappa$; and (iii)~a proof that the joint PA+PN CRB separates into an orthogonal (IBO, $\beta$) design space (Corollary~\ref{cor:joint})---absent from both~\cite{Keskin2023} and~\cite{Dong2025mimo}.
All results are validated by Monte Carlo simulation and shown robust to practical DPD template errors.

\emph{Notation:} Bold lowercase and uppercase denote vectors and matrices; $(\cdot)^H$ denotes Hermitian transpose; $\bm{I}_M$ is the $M\times M$ identity.\footnote{Additional notation: $(\cdot)^*$, $(\cdot)^T$ denote conjugate and transpose; $\Real\{\cdot\}$ denotes the real part.}

\section{System Model}
\label{sec:model}

\subsection{OFDM-ISAC Signal Model}

Consider a monostatic OFDM-ISAC system with $N$ subcarriers at spacing $\Delta f$ and $M$ symbols per coherent processing interval (CPI), with symbol duration $T = 1/\Delta f + T_{\mathrm{cp}}$ (Fig.~\ref{fig:system}).
The transmitter communicates with a single-antenna user and senses a point target with delay~$\tau$, Doppler~$\nu$, and reflectivity~$\alpha_s$.
The frequency-domain symbol $X[k,m]$ is drawn from a QAM constellation with energy $P_X = \E[|X[k,m]|^2]$.
We define centered indices $k' = k - (N{-}1)/2$ and $m' = m - (M{-}1)/2$, and the per-subcarrier sensing SNR as $\gamma_s \triangleq |\alpha_s|^2 P_X / \sigma^2$, where $\sigma^2$ is the noise variance.
The per-symbol aggregated sensing SNR used in the PN analysis is $\gamma_0 = 2 N\gamma_s$ (the factor~2 arises from parameterizing $\alpha_s = \alpha_R + j\alpha_I$ as two real unknowns in the FIM).
The target radial velocity is $v = \nu c/(2f_c)$ due to the monostatic two-way propagation.

\begin{figure}[!t]
\centering
\includegraphics[width=\columnwidth]{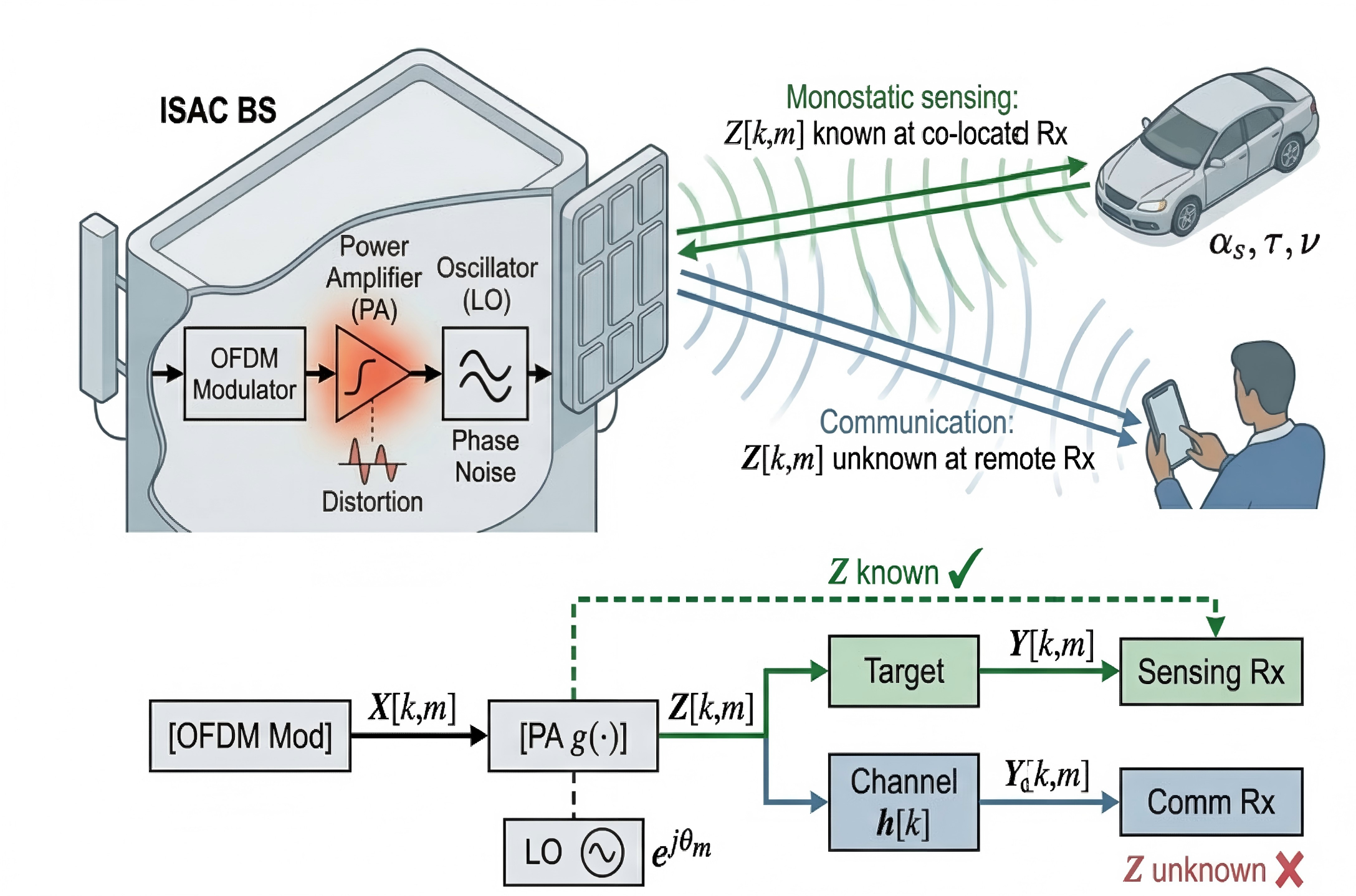}
\caption{Monostatic OFDM-ISAC system. \emph{Upper:} The ISAC base station transmits PA-distorted, phase-noise-affected waveforms for simultaneous sensing and communication. \emph{Lower:} Signal flow showing that $Z[k,m]$ is deterministically known at the co-located sensing receiver (\textcolor{green!60!black}{\checkmark}) but unknown at the remote communication receiver (\textcolor{red}{$\times$})---the asymmetry that the $\kappa$ model fails to capture.}
\label{fig:system}
\end{figure}

\subsection{PA Nonlinearity}

The frequency-domain symbols are converted to time domain, passed through a memoryless PA, and transformed back:
$x_m[n] = \frac{1}{\sqrt{N}}\sum_k X[k,m]\, e^{j2\pi kn/N}$,\\
$z_m[n] = g(x_m[n])$,
where $g(\cdot)$ is the PA input--output characteristic.
We adopt the Rapp model~\cite{Rapp1991}: $g(x) = x\bigl(1 + (|x|/A_{\mathrm{sat}})^{2p}\bigr)^{-1/(2p)}$ with saturation amplitude $A_{\mathrm{sat}}$ and smoothness factor~$p$.
The input back-off is $\mathrm{IBO} = 10\log_{10}(A_{\mathrm{sat}}^2/P_X)$~dB.
The PA output in frequency domain is $Z[k,m] = \frac{1}{\sqrt{N}}\sum_n z_m[n]\, e^{-j2\pi kn/N}$.

Crucially, since the transmitter generates $X[k,m]$ and applies $g(\cdot)$ digitally (or monitors the PA output via a digital pre-distortion (DPD) feedback path~\cite{Ghannouchi2009}), $Z[k,m]$ is a \emph{deterministic, known} quantity at the sensing receiver---it is not noise.
Theorem~\ref{thm:pa_crb} holds for any memoryless PA model; the Rapp model is adopted only for numerical evaluation.

\subsection{Phase Noise}

The oscillator introduces a common phase error (CPE) per OFDM symbol~\cite{Keskin2023, Wu2004}:
\begin{equation}
    \theta_m \approx \varphi(mT {+} T/2), \quad \bm{\theta} \sim \mathcal{N}(\bm{0}, \bm{C}_\theta),
    \label{eq:cpe}
\end{equation}
where $\varphi(t)$ is a Wiener process with 3-dB linewidth $\beta$, and $[\bm{C}_\theta]_{i,j} = \sigma_\Delta^2 \cdot \min(i{+}1, j{+}1)$ with $\sigma_\Delta^2 = 4\pi\beta T$.
The CPE approximation is accurate when $\sigma_\Delta^2 \ll 1$; for our parameters ($\beta \leq 1$~kHz, $T \approx 8.9$~$\mu$s), $\sigma_\Delta^2 \leq 0.12$, introducing $<$5\% error at the highest linewidth. The ICI-aware model of~\cite{Keskin2023} can be adopted for higher linewidths, where inter-subcarrier coupling modifies $J_{\tau,\theta_m}$ and may weaken the delay--PN decoupling.

Following~\cite{Keskin2023}, $\bm{\theta}$ represents accumulated phase from CPI start (worst case); shared-oscillator/short-range systems see partial TX--RX cancellation.

\subsection{Observation Models}

\emph{Sensing:} Under both PA and PN, the monostatic observation is
\begin{equation}
    Y[k,m] = \alpha_s\, e^{-j2\pi k'\Delta f \tau}\, e^{j2\pi\nu m'T}\, e^{j\theta_m}\, Z[k,m] + W[k,m],
    \label{eq:obs_sensing}
\end{equation}
where $W[k,m] \sim \CN(0, \sigma^2)$ is AWGN.

\emph{Communication:} The user receives
$Y_c[k,m] = h[k]\, e^{j\theta_m}\, Z[k,m] + W_c[k,m]$,
where $h[k]$ is the channel frequency response and $W_c \sim \CN(0, \sigma_c^2)$.
Unlike sensing, the communication receiver does \emph{not} know $X[k,m]$ and hence cannot compute $Z[k,m]$.

\section{PA-Aware Analysis}
\label{sec:pa}
\begin{figure*}[!t]
    \centering
    \includegraphics[width=\textwidth]{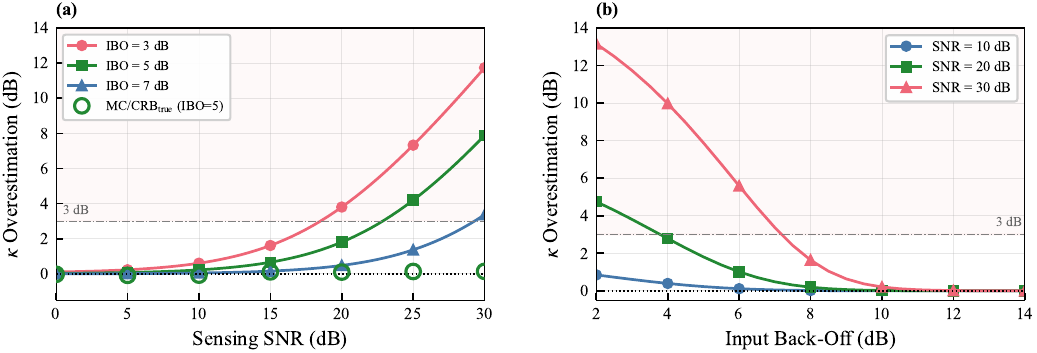}
    \caption{PA sensing CRB analysis. (a)~$\kappa$ model overestimation ratio vs.\ SNR: the physics-based CRB is validated by MC (hollow circles near 0~dB); the $\kappa$ model diverges at high SNR. (b)~Overestimation vs.\ IBO at fixed SNR values.}
    \label{fig:pa}
\end{figure*}

\subsection{Physics-Based Sensing CRB}

We first analyze PA in isolation ($\theta_m = 0$).
Since $Z[k,m]$ is known, the observation~\eqref{eq:obs_sensing} has known mean $\mu_{km} = \alpha_s\, e^{-j2\pi k'\Delta f\tau}\, e^{j2\pi\nu m'T}\, Z[k,m]$ and noise~$\sigma^2$.
Applying the Slepian--Bangs formula~\cite{VanTrees2013} to the target parameter vector $\bm{\eta}_d = [\tau, \nu, \alpha_R, \alpha_I]^T$ yields:

\begin{theorem}[PA-Aware Sensing CRB]
\label{thm:pa_crb}
The CRB for delay and Doppler estimation under PA nonlinearity is
\begin{equation}
    \mathrm{CRB}^{\mathrm{PA}}(\tau) = \frac{\sigma^2}{2|\alpha_s|^2(2\pi\Delta f)^2 \tilde{E}_{20}^{(Z)}},
    \label{eq:crb_pa_tau}
\end{equation}
\begin{equation}
    \mathrm{CRB}^{\mathrm{PA}}(\nu) = \frac{\sigma^2}{2|\alpha_s|^2(2\pi T)^2 \tilde{E}_{02}^{(Z)}},
    \label{eq:crb_pa_nu}
\end{equation}
where $\tilde{E}_{20}^{(Z)} \!= \!\sum_{k,m}(k')^2|Z[k,m]|^2$ and $\tilde{E}_{02}^{(Z)} \!= \!\sum_{k,m}(m')^2|Z[k,m]|^2$ are spectral moments of the PA output.
The degradation relative to the ideal case is $\Delta_{\mathrm{sens}}^{\mathrm{PA}} = 10\log_{10}(\tilde{E}_{20}^{(X)}/\tilde{E}_{20}^{(Z)})$~dB, which is \textbf{SNR-independent}. For the Rapp model with $p=3$ and 16-QAM, $\Delta_{\mathrm{sens}}^{\mathrm{PA}}$ remains below 1.1~dB for IBO~$\geq 3$~dB.
\end{theorem}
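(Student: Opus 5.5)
The plan is to apply the Slepian--Bangs formula directly to the known-mean Gaussian model $Y[k,m]\sim\CN(\mu_{km},\sigma^2)$. For complex circular noise this gives $[\bm{J}]_{ij} = \frac{2}{\sigma^2}\sum_{k,m}\Real\{\partial_i\mu_{km}^*\,\partial_j\mu_{km}\}$ with $\bm{\eta}_d=[\tau,\nu,\alpha_R,\alpha_I]^T$. The derivatives are elementary:
\begin{align*}
\partial_\tau\mu_{km} &= -j2\pi k'\Delta f\,\mu_{km},\\
\partial_\nu\mu_{km} &= j2\pi m' T\,\mu_{km},\\
\partial_{\alpha_R}\mu_{km} &= \mu_{km}/\alpha_s,\\
\partial_{\alpha_I}\mu_{km} &= j\mu_{km}/\alpha_s.
\end{align*}
Since $|\mu_{km}|^2=|\alpha_s|^2|Z[k,m]|^2$, every FIM entry becomes a weighted sum of $|Z[k,m]|^2$ with weights built from $1$, $k'$, $m'$, $(k')^2$, $(m')^2$ and $k'm'$. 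The key point is that $Z[k,m]$ enters only as a known deterministic amplitude, so no distortion term is added to the covariance. This is exactly where the analysis departs from the $\kappa$ model.

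The diagonal entries give $J_{\tau\tau}=\frac{2}{\sigma^2}|\alpha_s|^2(2\pi\Delta f)^2\tilde{E}_{20}^{(Z)}$ and $J_{\nu\nu}=\frac{2}{\sigma^2}|\alpha_s|^2(2\pi T)^2\tilde{E}_{02}^{(Z)}$. Inverting just these diagonals yields \eqref{eq:crb_pa_tau}--\eqref{eq:crb_pa_nu}. The main obstacle is justifying that the off-diagonal couplings vanish, so that the CRB equals the inverse diagonal.

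These couplings are of two kinds. The first kind is the cross terms $J_{\tau,\alpha}$ and $J_{\nu,\alpha}$. After taking real parts they are proportional to $\sum k'|Z|^2$ and $\sum m'|Z|^2$, multiplied by phase factors of $\alpha_s$ that are purely imaginary, or real, depending on the entry. The second kind is $J_{\tau\nu}\propto\sum k'm'|Z|^2$. With centered indices all of these are first-order moments. I would argue they vanish in expectation over the i.i.d.\ QAM data, because $\E|Z[k,m]|^2$ is symmetric in $k'$ under a memoryless PA acting on a circularly symmetric input spectrum, and does not depend on $m$. For a particular realization they are small relative to the second moments, by a law-of-large-numbers argument of order $O(\sqrt{NM})$ against $O(N^3M)$. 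The theorem should therefore be read as exact under this centering (the design choice of centered $k',m'$) and asymptotically exact in general. The remaining $2\times2$ amplitude block is diagonal, since $\Real\{j\}=0$. A Schur-complement step then confirms that the delay and Doppler entries of $\bm{J}^{-1}$ reduce to $1/J_{\tau\tau}$ and $1/J_{\nu\nu}$ once the cross terms are zero.

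For the degradation claim, I would divide the PA CRB by the ideal CRB, obtained by replacing $Z$ with $X$. The factor $\sigma^2/|\alpha_s|^2$ cancels, leaving $\tilde{E}_{20}^{(X)}/\tilde{E}_{20}^{(Z)}$. This ratio contains no $\sigma^2$, which establishes SNR-independence. The numerical $<1.1$~dB bound for Rapp with $p=3$ and 16-QAM at $\mathrm{IBO}\ge3$~dB comes from direct evaluation. Bussgang's decomposition $Z=\alpha_B X+D$ suggests the mechanism: the ratio is roughly $P_X/(|\alpha_B|^2P_X+\sigma_D^2)$ weighted toward the band edges. I would compute $\E|Z[k,m]|^2$ by sampling, or via the Bussgang gain as a function of IBO, and check that it is monotone in IBO. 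Verifying at IBO $=3$~dB then suffices, and I expect this evaluation to give about 1.1~dB.
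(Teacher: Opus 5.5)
Your proposal is correct and follows the paper's own route. Like the paper, you apply Slepian--Bangs with $Z[k,m]$ as the known waveform, dismiss the couplings $\sum k'|Z|^2$, $\sum m'|Z|^2$ and $\sum k'm'|Z|^2$ via centered indexing and spectral symmetry, invert the diagonal, and take the SNR-free ratio $\tilde{E}_{20}^{(X)}/\tilde{E}_{20}^{(Z)}$ with the $1.1$~dB figure from numerical evaluation; where the paper only cites a numerical residual below $10^{-4}$, you add an expectation/law-of-large-numbers justification. One small slip in that justification: the fluctuation of $\sum_{k,m}k'|Z[k,m]|^2$ is $O(\sqrt{N^3M})$ rather than $O(\sqrt{NM})$, and the meaningful measure is the normalized coupling $(\sum k'|Z|^2)^2/(\sum|Z|^2\,\tilde{E}_{20}^{(Z)})=O(1/(NM))$, which is the relative loss in $J_{\tau\tau}$ after eliminating $\alpha_s$ and still supports your conclusion.
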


\begin{IEEEproof}
Follows from the Slepian--Bangs formula with $Z[k,m]$ replacing $X[k,m]$ as the known waveform.
Under centered indexing and approximately symmetric energy allocation, the off-diagonal coupling terms of the FIM are negligible (verified numerically: residual $< 10^{-4}$), yielding the diagonal CRB~\eqref{eq:crb_pa_tau}--\eqref{eq:crb_pa_nu}.
\end{IEEEproof}

\emph{Physical insight:} The small degradation ($<$1.1~dB) occurs because PA spectral regrowth partially compensates the signal compression---the distortion energy adds useful high-frequency content to $\tilde{E}_{20}^{(Z)}$, partially offsetting the power loss.

\subsection{$\kappa$/Bussgang Overestimation}

The Bussgang decomposition~\cite{Bussgang1952} writes $Z[k,m] = \alpha_B X[k,m] + D[k,m]$, where $\alpha_B \triangleq \E[z\, x^*]/\E[|x|^2]$ is the linear gain and $\sigma_d^2 \triangleq \E[|z - \alpha_B x|^2]$ is the uncorrelated distortion variance, both estimated numerically from the time-domain PA input--output under Gaussian-distributed OFDM signals.
We evaluate the standard aggregate $\kappa$ model as used in~\cite{Salman2025, MateosRamos2024}, which treats $D$ as \emph{unknown noise} in the sensing echo, yielding effective noise $\sigma_{\mathrm{eff}}^2 = |\alpha_s|^2\sigma_d^2 + \sigma^2$.

\begin{theorem}[Overestimation Gap]
\label{thm:gap}
The ratio of the Bussgang CRB to the physics-based CRB satisfies
\begin{equation}
    \frac{\mathrm{CRB}^{B}(\tau)}{\mathrm{CRB}^{\mathrm{PA}}(\tau)} = \underbrace{\frac{\tilde{E}_{20}^{(Z)}}{|\alpha_B|^2\tilde{E}_{20}^{(X)}}}_{\approx\, 1} \cdot \underbrace{\left(1 + \frac{|\alpha_s|^2\sigma_d^2}{\sigma^2}\right)}_{\text{spurious, } \propto\, \gamma_s}\!\!,
    \label{eq:gap}
\end{equation}
where the second factor grows linearly with $\gamma_s$, creating a false sensing floor.
\end{theorem}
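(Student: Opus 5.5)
The plan is to compute the Bussgang/$\kappa$-model CRB in the same closed form as Theorem~\ref{thm:pa_crb} and then take the ratio. Under the $\kappa$ model the sensing receiver regards the echo as
\[
Y[k,m] = \alpha_s e^{-j2\pi k'\Delta f\tau} e^{j2\pi\nu m'T} \alpha_B X[k,m] + \bigl(\alpha_s e^{-j2\pi k'\Delta f\tau} e^{j2\pi\nu m'T} D[k,m] + W[k,m]\bigr).
\]
The bracketed term is treated as circularly symmetric Gaussian noise, independent across $(k,m)$. Its variance is $\sigma_{\mathrm{eff}}^2 = |\alpha_s|^2\sigma_d^2 + \sigma^2$, since the unit-modulus phase rotation does not change $\E[|D|^2]=\sigma_d^2$.

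Following the standard treatment in~\cite{Salman2025, MateosRamos2024}, I will take this effective variance as fixed, ignoring its weak dependence on $\alpha_s$. The model then has known mean $\alpha_s\alpha_B e^{(\cdot)}X[k,m]$ and white noise, so the Slepian--Bangs argument in the proof of Theorem~\ref{thm:pa_crb} applies verbatim. The only changes are $Z\to\alpha_B X$ and $\sigma^2\to\sigma_{\mathrm{eff}}^2$. Centered indexing again makes the off-diagonal terms of the FIM negligible, which gives
\[
\mathrm{CRB}^{B}(\tau)=\frac{\sigma^2+|\alpha_s|^2\sigma_d^2}{2|\alpha_s|^2(2\pi\Delta f)^2|\alpha_B|^2\tilde{E}_{20}^{(X)}}.
\]

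Dividing by~\eqref{eq:crb_pa_tau}, the factors $2|\alpha_s|^2(2\pi\Delta f)^2$ cancel. What remains is $\tilde{E}_{20}^{(Z)}/(|\alpha_B|^2\tilde{E}_{20}^{(X)})$ times $(\sigma^2+|\alpha_s|^2\sigma_d^2)/\sigma^2$, which is exactly~\eqref{eq:gap}. Writing $\sigma_d^2 = \rho P_X$ with $\rho=\sigma_d^2/P_X$ fixed by the PA and IBO, the second factor becomes $1+\rho\gamma_s$, which is linear in $\gamma_s$.

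Next I justify that the first factor is approximately $1$. Substituting $Z=\alpha_B X + D$ gives
\[
\tilde{E}_{20}^{(Z)} = |\alpha_B|^2\tilde{E}_{20}^{(X)} + 2\Real\Bigl\{\alpha_B\sum_{k,m} (k')^2 X D^*\Bigr\} + \tilde{E}_{20}^{(D)}.
\]
The cross term vanishes in expectation because the Bussgang distortion is uncorrelated with $X$, and it is small per realization for large $NM$ by a law-of-large-numbers argument. The last term satisfies $\tilde{E}_{20}^{(D)}\approx \sigma_d^2\sum (k')^2$, which is small relative to $|\alpha_B|^2 P_X\sum(k')^2$ at practical IBO. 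This also matches the $<1.1$~dB numbers of Theorem~\ref{thm:pa_crb}. The Doppler ratio follows identically with $(m')^2$ weights.

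The main obstacle is the first factor. Strictly it is $1+\sigma_d^2/(|\alpha_B|^2P_X)$ only under the white-distortion approximation, because spectral regrowth concentrates $D$ at the band edges where $(k')^2$ is large. I would therefore state it as $\approx 1$ and bound it by a constant independent of $\gamma_s$. That is all the conclusion needs, since the linear growth comes entirely from the second factor.
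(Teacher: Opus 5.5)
Your proposal is correct and follows the same route as the paper: form the Bussgang delay CRB by substituting $|\alpha_B|^2\tilde{E}_{20}^{(X)}$ for $\tilde{E}_{20}^{(Z)}$ and $\sigma_{\mathrm{eff}}^2$ for $\sigma^2$ in \eqref{eq:crb_pa_tau}, then take the ratio. Your extra steps justify what the paper only asserts in its underbraces. Writing $\sigma_d^2=\rho P_X$ exhibits the linear growth $1+\rho\gamma_s$, and expanding $Z=\alpha_B X+D$ shows the first factor is an SNR-independent constant near $1$; neither changes the argument.
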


\begin{IEEEproof}
Follows from the ratio of~\eqref{eq:crb_pa_tau} to the Bussgang CRB, which substitutes $|\alpha_B|^2\tilde{E}_{20}^{(X)}$ for $\tilde{E}_{20}^{(Z)}$ and uses noise~$\sigma_{\mathrm{eff}}^2$.
\end{IEEEproof}

\subsection{Communication Under PA}

For communication, the Bussgang model is correct: the receiver cannot reconstruct $Z[k,m]$.
The achievable SINR is $\gamma_c^{\mathrm{PA}} = |h|^2|\alpha_B|^2 P_X / (|h|^2\sigma_d^2 + \sigma_c^2)$, and the communication rate is $R = \log_2(1 + \gamma_c^{\mathrm{PA}})$ bits/s/Hz.\footnote{$R$ represents an upper bound assuming Gaussian signaling on a flat-fading channel; achievable rates with finite-alphabet constellations over frequency-selective channels are lower, but the asymmetry conclusions (dB ratios) remain valid.}
The communication degradation $\Delta_{\mathrm{comm}}^{\mathrm{PA}}$ grows with SNR due to a real distortion floor---in stark contrast to the SNR-independent sensing degradation.

\section{PN-Aware Analysis}
\label{sec:pn}
\begin{figure*}[!t]
    \centering
    \includegraphics[width=\textwidth]{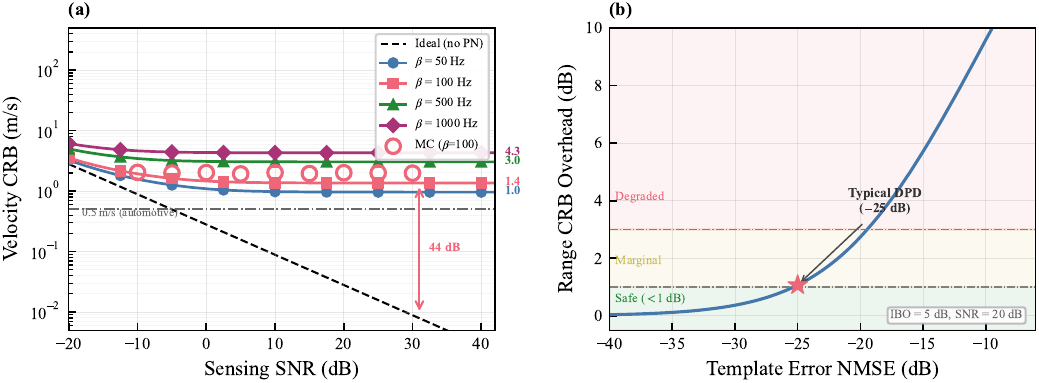}
    \caption{(a)~Velocity CRB vs.\ SNR: L-shaped curves reveal the transition from noise-limited to PN floor-limited regimes. The ideal baseline (dashed) diverges from all PN curves by $>$40~dB at high SNR. MC validates $\beta{=}100$~Hz. (b)~DPD template error sensitivity: range CRB overhead vs.\ template NMSE at IBO~$=5$~dB, SNR~$=20$~dB. Star marks typical DPD accuracy ($-25$~dB).}
    \label{fig:pn}
\end{figure*}
\subsection{Augmented FIM and Schur Complement}

With PN, the parameter vector is augmented: $\bm{\eta} = [\bm{\eta}_d^T, \bm{\theta}^T]^T$.
The random CPE $\bm{\theta}$ has prior $\mathcal{N}(\bm{0}, \bm{C}_\theta)$.
The effective FIM for target parameters follows from the hybrid CRB~\cite{VanTrees2013}:
\begin{equation}
    \bm{J}_{\mathrm{eff}} = \bm{J}_{dd} - \bm{J}_{d\theta}(\bm{J}_{\theta\theta} + \bm{C}_\theta^{-1})^{-1}\bm{J}_{\theta d},
    \label{eq:schur}
\end{equation}
where $\bm{J}_{dd}$, $\bm{J}_{d\theta}$, $\bm{J}_{\theta\theta}$ are blocks of the observation FIM.

\begin{theorem}[PN-Aware Sensing CRB]
\label{thm:pn}
Under CPE phase noise with symmetric spectrum:
\begin{enumerate}
    \item \textbf{Delay} is unaffected: $\mathrm{CRB}^{\mathrm{PN}}(\tau) = \mathrm{CRB}^{\mathrm{ideal}}(\tau)$, since $J_{\tau,\theta_m} \propto \sum_k k'|X[k,m]|^2 = 0$ under symmetric spectrum.
    \item \textbf{Doppler:} The effective FIM element is
    \begin{equation}
        J_{\nu\nu}^{\mathrm{eff}} = \gamma_0(2\pi T)^2\, \bm{m}'^T\!\left(\gamma_0\bm{C}_\theta + \bm{I}_M\right)^{-1}\!\bm{m}',
        \label{eq:J_eff_pn}
    \end{equation}
    where $\gamma_0 = 2|\alpha_s|^2 E_s/\sigma^2 = 2N\gamma_s$ is the per-symbol aggregated SNR, with $E_s = \sum_{k}|X[k,m]|^2 = NP_X$.
    \item \textbf{High-SNR floor:} As $\gamma_0 \to \infty$, $\mathrm{CRB}_\infty^{\mathrm{PN}}(\nu) = [(2\pi T)^2\bm{m}'^T\bm{C}_\theta^{-1}\bm{m}']^{-1}$, an irreducible limit with the velocity floor scaling as $\sqrt{\beta}$.
\end{enumerate}
\end{theorem}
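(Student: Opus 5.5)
I will apply the Slepian--Bangs formula to the observation model~\eqref{eq:obs_sensing} with PA effects suppressed ($Z=X$, or equivalently with $Z$ in place of $X$ throughout). The mean is $\mu_{km}=\alpha_s e^{-j2\pi k'\Delta f\tau}e^{j2\pi\nu m'T}e^{j\theta_m}X[k,m]$, so every observation FIM entry has the form $J_{ab}=\frac{2}{\sigma^2}\sum_{k,m}\Real\{\partial_a\mu_{km}^*\,\partial_b\mu_{km}\}$.

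The derivatives are
\begin{align*}
\partial_\tau\mu_{km} &= -j2\pi k'\Delta f\,\mu_{km},\\
\partial_\nu\mu_{km} &= j2\pi m'T\,\mu_{km},\\
\partial_{\theta_m}\mu_{km} &= j\,\mu_{km}\ \text{(only for symbol } m\text{)},\\
\partial_{\alpha_R}\mu_{km} &= \mu_{km}/\alpha_s,\\
\partial_{\alpha_I}\mu_{km} &= j\mu_{km}/\alpha_s .
\end{align*}
Since $|\mu_{km}|^2=|\alpha_s|^2|X[k,m]|^2$, this gives:
\begin{itemize}
\item $J_{\tau\theta_m}=-\frac{2}{\sigma^2}|\alpha_s|^2 2\pi\Delta f\sum_k k'|X[k,m]|^2$, which vanishes under a symmetric spectrum;
\item $J_{\nu\theta_m}=\frac{2}{\sigma^2}|\alpha_s|^2 2\pi T\,m' E_s$;
\item $J_{\theta\theta}=\frac{2|\alpha_s|^2E_s}{\sigma^2}\bm I_M=\gamma_0\bm I_M$;
\item $J_{\nu\nu}=\gamma_0(2\pi T)^2\|\bm m'\|^2$.
\end{itemize}
The amplitude parameters couple to $\theta$ only through $\alpha_I$, via a term proportional to $E_s$ that is constant across $m$. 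I will handle this as in the ideal case: reparameterize the reflectivity by magnitude and phase. The common phase is then absorbed by $\sum_m$ coupling, and its coupling to $\nu$ vanishes since $\sum_m m'=0$. I will argue that, after centering, the $\alpha$ block decouples from $(\tau,\nu)$ to the same approximation used in Theorem~\ref{thm:pa_crb}. This decoupling of the reflectivity phase from $\bm\theta$ is the point I expect to need most care: strictly, $\alpha_I$ and $\bm\theta$ share the direction $\bm 1$. However, $\bm C_\theta^{-1}$ regularizes it and the Doppler projection $\bm m'\perp\bm 1$ is unaffected. The cleanest route is to note that the resulting correction to the $\nu$-entry is proportional to $\bm m'^T(\cdot)\bm 1$, and to treat it as negligible, consistent with the diagonal approximation already adopted.

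Part 1 then follows directly. With $J_{\tau\theta}=\bm 0$ and $J_{\tau\nu}\approx0$ (by centering), the Schur correction in~\eqref{eq:schur} leaves the $\tau$ entry untouched, so $\mathrm{CRB}^{\mathrm{PN}}(\tau)=\mathrm{CRB}^{\mathrm{ideal}}(\tau)$.

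For Part 2, substitute $J_{\nu\theta}=\gamma_0 2\pi T\,\bm m'^T$ into~\eqref{eq:schur}:
\[
J^{\mathrm{eff}}_{\nu\nu}=\gamma_0(2\pi T)^2\big[\bm m'^T\bm m'-\gamma_0\bm m'^T(\gamma_0\bm I+\bm C_\theta^{-1})^{-1}\bm m'\big].
\]
I will then apply the identity
\[
\bm I-\gamma_0(\gamma_0\bm I+\bm C_\theta^{-1})^{-1}=(\gamma_0\bm C_\theta+\bm I)^{-1},
\]
which is checked by multiplying through by $(\gamma_0\bm I+\bm C_\theta^{-1})$ and factoring $\bm C_\theta^{-1}$. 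This yields~\eqref{eq:J_eff_pn}.

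For Part 3, write
\[
\gamma_0(\gamma_0\bm C_\theta+\bm I)^{-1}=(\bm C_\theta+\gamma_0^{-1}\bm I)^{-1}\to\bm C_\theta^{-1}
\]
as $\gamma_0\to\infty$. This limit exists because $\bm C_\theta$, the Wiener min-covariance matrix, is positive definite. Inverting the limit gives the stated floor. Since $\bm C_\theta=\sigma_\Delta^2\bm K$ with $\bm K$ independent of $\beta$ and $\sigma_\Delta^2=4\pi\beta T$, the floor satisfies $\mathrm{CRB}_\infty\propto\beta$, so the velocity standard deviation $\frac{c}{2f_c}\sqrt{\mathrm{CRB}}$ scales as $\sqrt\beta$.
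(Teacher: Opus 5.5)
Your route is the paper's route: the same Slepian--Bangs entries, the same identity $\bm{I}-\gamma_0(\gamma_0\bm{I}+\bm{C}_\theta^{-1})^{-1}=(\gamma_0\bm{C}_\theta+\bm{I})^{-1}$, and the same $\gamma_0\to\infty$ limit. Your expressions for $J_{\tau\theta_m}$, $\bm{J}_{\nu\theta}$, $\bm{J}_{\theta\theta}$ and for the $(\nu,\nu)$ entry of the Schur complement over $\bm{\theta}$ are correct. The step that fails is the one you flagged: the reflectivity-phase coupling is not negligible. Write $\alpha_s=|\alpha_s|e^{j\phi}$; in Cartesian form both $\alpha_R$ and $\alpha_I$ couple to $\bm{\theta}$ unless $\alpha_s$ is real. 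The data FIM gives $J_{\phi\phi}=\gamma_0M$, $\bm{J}_{\phi\theta}=\gamma_0\bm{1}^T$ and $J_{\phi\nu}=0$. After eliminating $\bm{\theta}$, however, with $\bm{B}=(\gamma_0\bm{C}_\theta+\bm{I}_M)^{-1}$,
\[
J^{\mathrm{eff}}_{\nu\phi}=2\pi T\,\gamma_0\,\bm{m}'^T\bm{B}\bm{1},\qquad J^{\mathrm{eff}}_{\phi\phi}=\gamma_0\,\bm{1}^T\bm{B}\bm{1},
\]
so the Doppler information is $\gamma_0(2\pi T)^2[\bm{m}'^T\bm{B}\bm{m}'-(\bm{m}'^T\bm{B}\bm{1})^2/\bm{1}^T\bm{B}\bm{1}]$. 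Euclidean orthogonality $\bm{m}'\perp\bm{1}$ does not kill the cross term. What matters is the $\bm{B}$-inner product, which is nonzero because the Wiener covariance is anchored at $m=0$, not at the CPI centre. Unlike the couplings dropped in Theorem~\ref{thm:pa_crb}, which vanish exactly under centering, this one is created by the prior, and it is large. The first column of the min-matrix is all ones, so $\bm{C}_\theta^{-1}\bm{1}=\sigma_\Delta^{-2}\bm{e}_0$. Hence $\bm{m}'^T\bm{C}_\theta^{-1}\bm{1}=-\sigma_\Delta^{-2}(M-1)/2$ and $\bm{1}^T\bm{C}_\theta^{-1}\bm{1}=\sigma_\Delta^{-2}$, while $\bm{m}'^T\bm{C}_\theta^{-1}\bm{m}'=\sigma_\Delta^{-2}[(M-1)^2/4+(M-1)]$. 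In the floor regime the correction removes the $(M-1)^2/4$ term, leaving $\sigma_\Delta^2/[(2\pi T)^2(M-1)]$. For $M=14$ this is a factor $1+(M-1)/4\approx4.25$ (about $6.3$~dB) above the stated floor. Physically, with unknown reflectivity phase only the $M-1$ phase increments carry Doppler information.

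The paper's proof does not confront this either. It simply inverts $J^{\mathrm{eff}}_{\nu\nu}$, which amounts to treating $\arg\alpha_s$ as known, or to discarding the $(\nu,\alpha)$ entries of $\bm{J}_{\mathrm{eff}}$. Its reported floors, e.g.\ $1.36$~m/s at $\beta=100$~Hz, are exactly the uncorrected $\bm{m}'^T\bm{C}_\theta^{-1}\bm{m}'$ values. So your argument reproduces the paper's, but the justification you add for the discarded term is false. To prove the statement as written, replace the negligibility claim by an explicit assumption: a known reflectivity phase, i.e.\ an absolute phase reference. Alternatively, present Part~3 as the lower bound $1/J^{\mathrm{eff}}_{\nu\nu}\le[\bm{J}_{\mathrm{eff}}^{-1}]_{\nu\nu}$ rather than an equality. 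Part~1 and the $\sqrt{\beta}$ scaling survive either way: $\tau$ stays decoupled from $\phi$ and $\bm{\theta}$, and the corrected floor is still proportional to $\sigma_\Delta^2$.
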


\begin{IEEEproof}
The delay--PN coupling vanishes under symmetric spectrum ($J_{\tau,\theta_m} = 0$).
For Doppler, the Schur complement~\eqref{eq:schur} with $\bm{I} - \gamma_0(\gamma_0\bm{I}+\bm{C}_\theta^{-1})^{-1} = (\gamma_0\bm{C}_\theta+\bm{I})^{-1}$ yields~\eqref{eq:J_eff_pn}.
The floor follows from $(\gamma_0\bm{C}_\theta{+}\bm{I})^{-1} \to (\gamma_0\bm{C}_\theta)^{-1}$ as $\gamma_0 \to \infty$.
\end{IEEEproof}

\emph{Physical insight:} PN adds a random per-symbol phase $\theta_m$ indistinguishable from the Doppler phase $2\pi\nu mT$.
At high SNR, the PN realization is estimated from data, masking the Doppler signature; only the PN prior resolves this ambiguity.

\subsection{Joint PA+PN Separability}

Under the joint observation model~\eqref{eq:obs_sensing} with both PA and PN, the known waveform $Z[k,m]$ replaces $X[k,m]$ in the augmented FIM.
Since $Z[k,m]$ is deterministic, the FIM block structure is preserved:

\begin{corollary}[Orthogonal Design Space]
\label{cor:joint}
Under joint PA nonlinearity and CPE phase noise, the sensing CRB satisfies:
\begin{enumerate}
    \item \textbf{Delay:} $\mathrm{CRB}^{\mathrm{PA{+}PN}}(\tau) = \mathrm{CRB}^{\mathrm{PA}}(\tau)$, determined solely by IBO (PA spectral moments).
    \item \textbf{Doppler floor:} $\mathrm{CRB}_\infty^{\mathrm{PA{+}PN}}(\nu) = [(2\pi T)^2 \bm{m}'^T\bm{C}_\theta^{-1}\bm{m}']^{-1}$, determined solely by $\beta$ (PN statistics), independent of PA.
    \item \textbf{Communication rate:} $R^{\mathrm{PA{+}PN}} \approx R^{\mathrm{PA}}$, determined solely by IBO, since PN contributes $<$0.14~dB loss.
\end{enumerate}
\end{corollary}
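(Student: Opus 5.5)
The plan is to write out the augmented FIM for the joint model~\eqref{eq:obs_sensing} with parameter vector $\bm{\eta}=[\tau,\nu,\alpha_R,\alpha_I,\bm{\theta}^T]^T$. Because $Z[k,m]$ is known and deterministic, the Slepian--Bangs formula applies exactly as in Theorems~\ref{thm:pa_crb} and~\ref{thm:pn}, with mean $\mu_{km}=\alpha_s e^{-j2\pi k'\Delta f\tau}e^{j2\pi\nu m'T}e^{j\theta_m}Z[k,m]$. I would first compute the derivatives $\partial\mu_{km}/\partial\tau = -j2\pi k'\Delta f\,\mu_{km}$, $\partial\mu_{km}/\partial\nu = j2\pi m'T\,\mu_{km}$ and $\partial\mu_{km}/\partial\theta_m = j\mu_{km}$. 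From these I would read off the blocks. The entries are $J_{\tau\theta_m}\propto -\sum_k k'|Z[k,m]|^2$ and $J_{\nu\theta_m}\propto m'\sum_k|Z[k,m]|^2$, and $\bm{J}_{\theta\theta}$ is diagonal with entries $(2|\alpha_s|^2/\sigma^2)\sum_k|Z[k,m]|^2$. The unit-modulus factor $e^{j\theta_m}$ drops out of every $|\mu_{km}|^2$, so every PA-only entry of $\bm{J}_{dd}$ coincides with the one in Theorem~\ref{thm:pa_crb}.

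For item~1, I would argue that $\sum_k k'|Z[k,m]|^2\approx 0$. The reason is that the memoryless PA acting on a symmetric-spectrum OFDM symbol yields a statistically symmetric output spectrum, since spectral regrowth is symmetric about the carrier. This is the same centered-indexing argument used in the proof of Theorem~\ref{thm:pa_crb}. Then $\bm{J}_{\tau\theta}=\bm{0}$, so the $\tau$ row is untouched by the Schur complement~\eqref{eq:schur}. Invoking the near-diagonality already established in Theorem~\ref{thm:pa_crb} then gives $\mathrm{CRB}^{\mathrm{PA}}(\tau)$. I expect this to be the main obstacle. Symmetry holds for $|X|^2$ by design, but for $|Z|^2$ it holds only in expectation or approximately per symbol. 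I would justify it by the Bussgang split: the cross term $\Real\{\alpha_B X D^*\}$ and $|D|^2$ have a mirror-symmetric mean spectrum. The residual would then be bounded numerically, as the paper does for Theorem~\ref{thm:pa_crb}.

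For item~2, I would repeat the Schur-complement algebra of Theorem~\ref{thm:pn} with $E_s$ replaced by the per-symbol output energy $E_s^{(Z)}=\sum_k|Z[k,m]|^2$. I would assume this is approximately constant in $m$, which holds by concentration for large $N$. This gives $\gamma_0^{(Z)}=2|\alpha_s|^2E_s^{(Z)}/\sigma^2$ and $J^{\mathrm{eff}}_{\nu\nu}=\gamma_0^{(Z)}(2\pi T)^2\bm{m}'^T(\gamma_0^{(Z)}\bm{C}_\theta+\bm{I}_M)^{-1}\bm{m}'$. Letting $\gamma_0^{(Z)}\to\infty$, the factor $\gamma_0^{(Z)}$ cancels exactly as in Theorem~\ref{thm:pn}, item~3. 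The floor is then $[(2\pi T)^2\bm{m}'^T\bm{C}_\theta^{-1}\bm{m}']^{-1}$, which contains no PA quantity, because PA only rescales the effective SNR. The coupling of $\nu$ to $\alpha_R,\alpha_I$ vanishes under centered $m'$, as before.

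For item~3, I would use the communication model. The CPE $\theta_m$ is a common rotation per symbol, tracked by pilots, so the only residual loss is the CPE estimation error or the neglected ICI power. This power is of order $\sigma_\Delta^2$ times a small factor. Evaluating it at $\beta\le 1$~kHz bounds the SINR penalty by $0.14$~dB. Consequently $R^{\mathrm{PA+PN}}\approx\log_2(1+\gamma_c^{\mathrm{PA}})$, which depends only on IBO through $\alpha_B$ and $\sigma_d^2$.
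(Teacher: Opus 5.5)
Your treatment of items~1 and~2 is essentially the paper's proof. You replace $X[k,m]$ by $Z[k,m]$ in the augmented FIM. You argue $J_{\tau,\theta_m}\propto\sum_k k'|Z[k,m]|^2\approx 0$ from PA-preserved spectral symmetry, which the paper also settles only numerically. You then note that $\gamma_0^{(Z)}$ cancels in the limit of $\gamma_0^{(Z)}(\gamma_0^{(Z)}\bm{C}_\theta+\bm{I}_M)^{-1}$. Your assumption that $E_s^{(Z)}$ is constant in $m$ is needed only for the finite-SNR closed form. With $\bm{\Gamma}=\mathrm{diag}(\gamma_{0,m}^{(Z)})$ the Schur complement is $(2\pi T)^2\bm{m}'^T(\bm{\Gamma}^{-1}+\bm{C}_\theta)^{-1}\bm{m}'$, which reaches the same floor without that assumption.

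The gap is in item~3. The paper's $0.14$~dB bound does not come from $\beta$ or from ICI; it comes from pilot-aided CPE removal. With $N_p$ pilot subcarriers the residual phase error has variance $\sigma_{\hat\theta}^2=1/(2N_p\gamma_c)$. Acting as multiplicative jitter, it adds interference $\approx\gamma_c\sigma_{\hat\theta}^2=1/(2N_p)$ relative to the noise. The SINR therefore becomes $\gamma_c/(1+1/(2N_p))$, and the loss is $10\log_{10}(1+1/(2N_p))\approx 0.134$~dB for $N_p=16$. The essential feature is that the residual variance scales as $1/\gamma_c$, so the loss is independent of both the SNR and $\beta$. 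That $\beta$-independence is exactly what ``rate determined solely by IBO'' requires.

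Your ICI route cannot deliver this. Under the CPE model used throughout, ICI is zero by construction. If you reinstate it, it is additive interference of relative power roughly $\sigma_\Delta^2/6\approx 2\times 10^{-2}$ at $\beta=1$~kHz. Its penalty $10\log_{10}(1+\gamma_c P_{\mathrm{ICI}})$ grows with both $\gamma_c$ and $\beta$, and is several dB at $\gamma_c=20$~dB. So ``evaluating at $\beta\le 1$~kHz'' yields neither the stated number nor the needed $\beta$-independence. You need to quantify the CPE estimation error you mention only in passing: use the $1/(2N_p\gamma_c)$ residual under $\gamma_c\gg 1$ and $N_p\ge 16$.
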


\begin{IEEEproof}
Replacing $X[k,m]$ by $Z[k,m]$ in Theorem~\ref{thm:pn}, the per-symbol SNR becomes $\gamma_0^{(Z)} = 2|\alpha_s|^2 E_s^{(Z)}/\sigma^2$, where $E_s^{(Z)} = \sum_{k}|Z[k,m]|^2$.
PA modifies only $E_s^{(Z)}$; the covariance $\bm{C}_\theta$ and symbol-index vector $\bm{m}'$ are unchanged.
In the Doppler floor regime ($\gamma_0^{(Z)} \bm{C}_\theta \gg \bm{I}_M$), the Schur complement~\eqref{eq:J_eff_pn} simplifies to $(2\pi T)^2 \bm{m}'^T \bm{C}_\theta^{-1}\bm{m}'$, where $\gamma_0^{(Z)}$ cancels.
Thus the velocity floor depends exclusively on $\bm{C}_\theta$ (i.e., $\beta$), not on the PA output energy.
For delay, $J_{\tau,\theta_m} \propto \sum_k k'|Z[k,m]|^2 \approx 0$ since PA, being memoryless and applied to real-valued envelopes, preserves the spectral symmetry of $|Z[k,m]|^2$ (verified numerically: residual $< 10^{-4}$ for $NM \geq 3584$).
\end{IEEEproof}

\emph{Design implication:} IBO controls rate (via $\sigma_d^2$) and $\beta$ controls velocity (via $\bm{C}_\theta$); the separability is exact for delay (all SNR) and the Doppler floor (high SNR), with residual coupling vanishing as SNR grows.
Fig.~\ref{fig:design} confirms this: velocity contours are horizontal, rate contours vertical.

\subsection{Communication Under PN}

Communication receivers estimate and remove $\theta_m$ using $N_p$ pilot subcarriers, achieving residual variance $\sigma_{\hat\theta}^2 = 1/(2 N_p \gamma_c)$ at communication SNR~$\gamma_c$~\cite{Wu2004}.
At moderate-to-high SNR ($\gamma_c \gg 1$), the SINR loss converges to $\Delta_{\mathrm{comm}}^{\mathrm{PN}} \approx 10\log_{10}(1 + 1/(2N_p))$~dB, which is below 0.14~dB for $N_p \geq 16$.
Thus, PN is effectively a sensing-only impairment at practical communication SNR.

\section{Numerical Results}
\label{sec:results}

\begin{figure}[!t]
    \centering
    \includegraphics[width=\columnwidth]{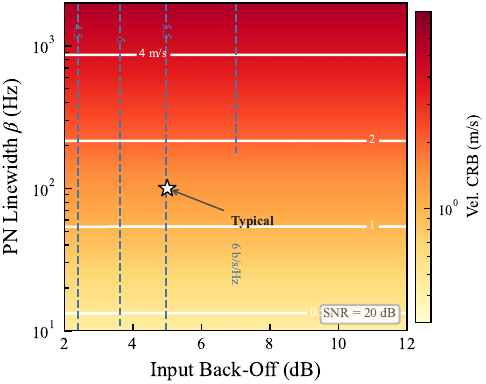}
    \caption{Joint IBO--$\beta$ design map at SNR~$=20$~dB. Color shows velocity CRB (m/s); white contours are velocity thresholds, blue dashed contours are communication rates (bits/s/Hz). Contour orthogonality directly visualizes the separability in Corollary~\ref{cor:joint}. Star marks a typical operating point.}
    \label{fig:design}
\end{figure}

\begin{figure*}[!t]
    \centering
    \includegraphics[width=\textwidth]{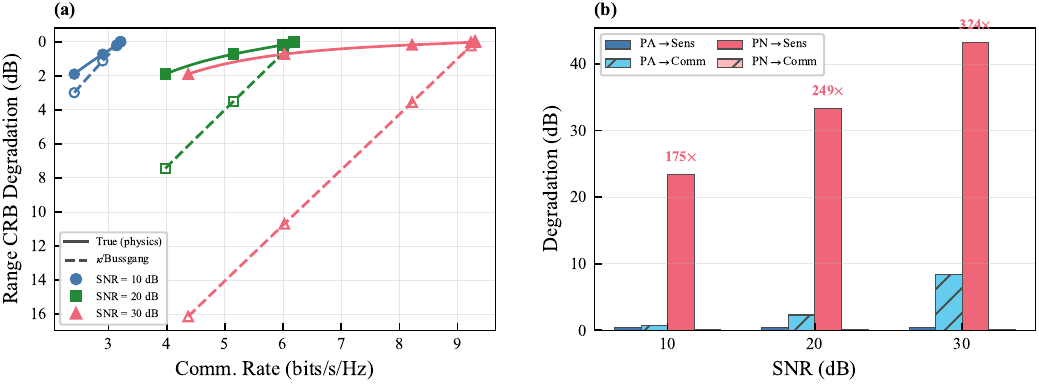}
    \caption{(a)~Sensing--communication Pareto frontier at SNR~$= 10, 20, 30$~dB: the physics-based model (solid) shows significantly smaller sensing cost than the $\kappa$ model (dashed), with the gap widening at higher SNR. (b)~Bidirectional asymmetry (IBO~$=5$~dB, $\beta{=}100$~Hz): PA predominantly degrades communication while PN predominantly degrades sensing. Annotations show the sensing-to-communication degradation-factor ratio $10^{(\Delta_{\mathrm{sens}}-\Delta_{\mathrm{comm}})/10}$.}
    \label{fig:combined}
\end{figure*}
\subsection{PA Overestimation}


The system uses $N{=}256$, $M{=}14$, $\Delta f{=}120$~kHz ($T {\approx} 8.9$~$\mu$s), 16-QAM, $f_c{=}28$~GHz, Rapp PA ($p{=}3$), Wiener CPE, IBO~$=5$~dB, $\beta{=}100$~Hz, and $N_p{=}16$.
MC simulations use grid-search estimation with $N_{\mathrm{MC}}{=}1500$ (PA) and $800$ (PN) realizations.

\subsection{PN Velocity Floor and DPD Robustness}

Fig.~\ref{fig:pa}(a) plots the $\kappa$ model overestimation ratio versus sensing SNR.
MC simulation (hollow circles, IBO~$=5$~dB) validates the physics-based CRB: the MC MSE fluctuates near 0~dB, with deviations attributable to finite-sample variance.
The $\kappa$ overestimation grows from $<$1~dB at low SNR to 12~dB at SNR~$=30$~dB with IBO~$=3$~dB; even at IBO~$=7$~dB, it reaches ${\sim}4$~dB at 30~dB.
Fig.~\ref{fig:pa}(b) confirms that the overestimation exceeds 3~dB for IBO~$\leq 8$~dB at SNR~$=30$~dB, demonstrating that the $\kappa$ model is unreliable across the practical operating region.

Fig.~\ref{fig:pn}(a) plots the absolute velocity CRB versus sensing SNR.
Each curve exhibits an L-shaped transition: at low SNR, the CRB follows the ideal (noise-limited) baseline, while at high SNR it saturates at a PN-induced floor that scales as $\sqrt{\beta}$ (Theorem~\ref{thm:pn}), yielding floors of 0.96, 1.36, 3.04, and 4.30~m/s for $\beta = 50, 100, 500, 1000$~Hz---all exceeding the 0.5~m/s automotive requirement~\cite{Sturm2011}.
MC simulation ($\beta = 100$~Hz) validates the analytical CRB.
The 44~dB gap between the ideal and $\beta{=}100$~Hz curves at SNR~$=30$~dB quantifies the wasted transmit power: beyond the PN-limited regime, increasing SNR yields no velocity improvement.

Fig.~\ref{fig:pn}(b) examines the sensitivity of the range CRB to imperfect PA template knowledge.
The overhead remains below 1~dB for template error NMSE~$\leq -25$~dB---the typical accuracy of practical DPD~\cite{Ghannouchi2009} (star)---and below 0.5~dB for NMSE~$\leq -30$~dB, validating that the ``$Z$ known'' assumption is robust under practical calibration accuracy.

\subsection{Joint IBO--$\beta$ Design Map}

Fig.~\ref{fig:design} maps velocity CRB and communication rate jointly over the (IBO, $\beta$) design space at SNR~$=20$~dB.
The velocity contours (white) are nearly horizontal ($\beta$-dominated) while the rate contours (blue) are nearly vertical (IBO-dominated), directly visualizing the orthogonal separability of Corollary~\ref{cor:joint}.
A system designer can independently specify the PA linearity for communication and the oscillator quality for sensing.
For example, at the typical operating point (star), relaxing PA linearity from IBO~$=5$ to 3~dB reduces the rate by 0.8~bits/s/Hz but changes the velocity CRB by $<$0.001~m/s; improving oscillator quality from $\beta=100$ to 50~Hz halves the velocity CRB without affecting the rate.

\subsection{Pareto Frontier and Asymmetry}

Fig.~\ref{fig:combined}(a) compares the sensing--communication Pareto frontiers under the physics-based and $\kappa$/Bussgang models at SNR~$= 10, 20, 30$~dB.
At each SNR, the true frontier (solid) lies well above the $\kappa$ prediction (dashed), with the gap growing from $<$1~dB at 10~dB to ${\sim}8$~dB at 30~dB---confirming that $\kappa$ model pessimism worsens precisely where accurate sensing analysis matters most.
Fig.~\ref{fig:combined}(b) quantifies the bidirectional asymmetry at IBO~$=5$~dB: at SNR~$=20$~dB, PA degrades communication by 2.3~dB but sensing by only 0.49~dB, while PN degrades sensing by 33.3~dB but communication by only 0.13~dB, a gap of 33.2~dB.

\section{Conclusion}
\label{sec:conclusion}

The $\kappa$ distortion model is inadequate for monostatic sensing in ISAC: a physics-based analysis reveals a bidirectional asymmetry yielding an orthogonal (IBO, $\beta$) design space (Corollary~\ref{cor:joint}), robust to practical DPD errors.
PA linearity is primarily a communication requirement; oscillator quality is primarily a sensing requirement.
Extension to MIMO, memory PA effects, I/Q imbalance, and multi-target scenarios is left for future work.

\balance
\bibliographystyle{IEEEtran}
\bibliography{refs}

\end{document}